\definecolor{hellblau}{rgb}{0.2,0.4,1} 
\definecolor{dunkelblau}{rgb}{0,0,0.8}
\definecolor{dunkelgruen}{rgb}{0,0.5,0}
\theoremstyle{plain} 
	\newtheorem{satz}{Satz}[] 
	\newtheorem{theorem}[satz]{Theorem}
	\newtheorem{lemma}[satz]{Lemma}
\theoremstyle{remark} 
	\newtheorem{conjecture}[satz]{Conjecture}
\theoremstyle{definition} 
	\newtheorem{corollary}[satz]{Corollary}
\title{Counting $K_4$-Subdivisions}
\author{Tillmann Miltzow\footnote{FU Berlin, Germany. Tillmann Miltzow is supported by ERC Starting Grant PARAMTIGHT (No. 280152).} \and Jens M. Schmidt\footnote{TU Ilmenau, Germany.} \and Mingji Xia\footnote{State Key Laboratory of Computer Science, Institute of Software, Chinese Academy of Sciences.}}
\date{}
\begin{document}
\maketitle

\begin{abstract}
A fundamental theorem in graph theory states that any 3-connected graph contains a subdivision of $K_4$. As a generalization, we ask for the minimum number of $K_4$-subdivisions that are contained in every $3$-connected graph on $n$ vertices. We prove that there are $\Omega(n^3)$ such $K_4$-subdivisions and show that the order of this bound is tight for infinitely many graphs. We further investigate a better bound in dependence on $m$ and prove that the computational complexity of the problem of counting the exact number of $K_4$-subdivisions is $\#P$-hard.
\end{abstract}

\section{Introduction}
Subdivisions of the complete graph $K_4$ on four vertices play a prominent role in graph structure theory: They do not only form the inductive anchor for constructive characterizations of 3-connectivity such as Tutte's Wheel Theorem~\cite{Tutte1961} or Barnette and Grünbaum's characterization~\cite{Barnette1969}, they also received considerable attention in the variants of \emph{$K_4^-$-subdivisions} (where $K_4^-$ is a $K_4$ minus one edge) and \emph{totally odd $K_4$-subdivisions} (in which every subdivided edge is of odd length) due to applications for colorings, planarity and parity constrained disjoint paths problems~\cite{Mader1998,Kawarabayashi2010}.

It is folklore that every 3-connected graph contains a $K_4$-subdivision (see~\cite{Barnette1969} for an early reference); with containment we always mean the usual subgraph-relation. In terms of connectivity, this is optimal, since 2-connected graphs do not necessarily contain a $K_4$-subdivision, as cycles or the arbitrarily large graphs $K_{2,n-2}$ show. In terms of numbers, it is optimal, as the minimal $3$-connected graph $K_4$ contains exactly one $K_4$-subdivision.

As a generalization, we ask for the minimum number $\phi(n)$ of pairwise different $K_4$-subdivisions that are contained in every 3-connected graph on $n$ vertices. The dependence on $n$ will allow to prove more than just one such subdivision. We will prove that $\phi(n) \in \Omega(n^3)$ and that this lower bound is tight up to constant factors. We also show that there is a better lower bound when the input graph has many edges, namely $\Omega(m^4/n)$. Finally, we show that the computational problem of counting these $K_4$-subdivisions exactly is $\#P$-hard. Clearly, the \emph{maximal} number of different $K_4$-subdivisions may be exponential in $n$, as the complete graphs show.

\section{Preliminaries}\label{sec:prel}
We will consider only finite and simple graphs. A \emph{subdivision} of a graph $H$ (a \emph{$H$-subdivision}) is a graph obtained from $H$ by replacing every edge with a path of length at least one. A vertex of a $H$-subdivision is called \emph{real} if it has degree at least three in $H$ and \emph{unreal} otherwise.

A \emph{$k$-separator} of a graph $G=(V,E)$ is a set of $k$ vertices whose deletion leaves a disconnected graph. Let $n := |V|$ and $m := |E|$. A graph $G$ is \emph{$k$-connected} if $n > k$ and $G$ contains no $(k-1)$-separator. A path from a vertex $s$ to a vertex $t$ is called an $s$-$t$-path (and contains every vertex at most once). A set of paths is called \emph{independent} if the intersection of the vertices of every two paths is a subset of $\{s,t\}$.

We first give an upper bound for the minimal number of $K_4$-subdivisions in 3-connected graphs with $n$ vertices. Consider a wheel-graph (see Figure~\ref{fig:k4wheel}). Every $K_4$-subdivision of such a graph contains the central vertex as real vertex, as otherwise there are at most two real vertices instead of the desired four. The remaining part of the subdivision is then uniquely defined by choosing $3$ real vertices arbitrarily on the rim of the wheel. This implies the upper bound $\phi(n) \leq \binom{n-1}{3}$ and thus $\phi(n) \in O(n^3)$.

\begin{figure}[!ht]
	\centering
	\includegraphics[width = 0.25\textwidth]{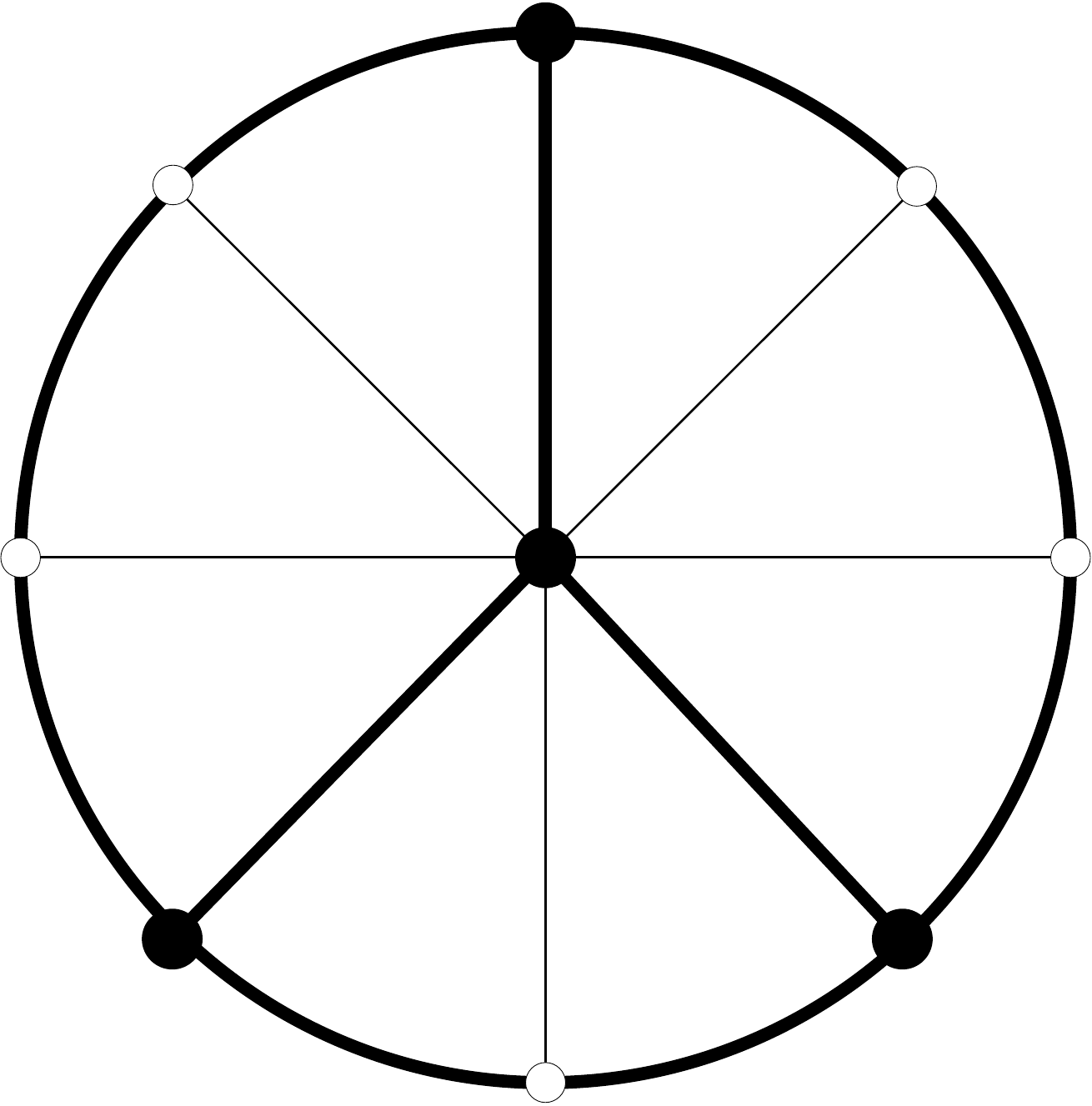}
	\caption{A $K_4$-subdivision of a wheel-graph (fat edges). The black vertices are real vertices.}
	\label{fig:k4wheel}
\end{figure}

For an adequate lower bound of the same order, we will first show some useful facts about the minimum number of cycles in 2-connected graphs.

\section{Cycles in 2-Connected Graphs}
An \emph{open ear decomposition} of a $2$-connected graph $G=(V,E)$ is a sequence $(P_1,P_2,\ldots,P_l)$ of subgraphs of $G$ partitioning $E$ such that $P_1$ is a cycle and every $P_i \neq P_1$ is a path that intersects $P_1 \cup \cdots \cup P_{i-1}$ in exactly its endpoints. Each $P_i$ is called an (open) \emph{ear}~\cite{Lovasz1985,Whitney1932a}.

Open ear decompositions are known to exist for and only for 2-connected graphs. For each $i$, $P_1 \cup \cdots \cup P_i$ is again 2-connected. We will first establish a lower bound on the minimum number of cycles in 2-connected graphs, which is dependent on the number of ears. The first lemma ensures that there are many distinct paths with fixed endvertices.

\begin{lemma}\label{pathlemma}
Let $s$ and $t$ be two vertices in a $2$-connected graph $G$ with $l$ ears. Then $G$ contains $l+1$ distinct $s$-$t$-paths.
\end{lemma}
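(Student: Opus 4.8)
The plan is to induct on the ears of an open ear decomposition, producing one additional $s$-$t$-path each time an ear is added. To keep both endpoints available throughout the induction, I would first choose the decomposition $(P_1,\dots,P_l)$ so that $s$ and $t$ both lie on the initial cycle $P_1$. This is possible because $G$ is $2$-connected, so there is a cycle through $s$ and $t$ (formed by two independent $s$-$t$-paths), and any such cycle can serve as the starting cycle of an open ear decomposition by the standard greedy construction. Writing $G_i := P_1 \cup \cdots \cup P_i$, each $G_i$ is then $2$-connected and contains both $s$ and $t$.

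For the base case, $G_1 = P_1$ is a cycle through $s$ and $t$, which contains exactly the $2 = 1+1$ $s$-$t$-paths given by its two arcs. For the inductive step, assume $G_{i-1}$ contains at least $i$ distinct $s$-$t$-paths, and let $P_i$ be the next ear, a path whose endpoints $x,y$ lie in $G_{i-1}$ and whose internal vertices are new and have degree $2$ in $G_i$. The key observation is that any $s$-$t$-path using an edge of $P_i$ cannot be contained in $G_{i-1}$, so it is automatically distinct from all $i$ paths found so far; hence it suffices to produce a single $s$-$t$-path through $P_i$. Moreover, since the internal vertices of $P_i$ have degree $2$, any path meeting the interior of $P_i$ must traverse $P_i$ entirely from $x$ to $y$, so such a path is obtained by linking $s$ and $t$ to $x$ and $y$ through $G_{i-1}$ and splicing in the ear.

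To produce this linkage I would apply the set version of Menger's theorem to $A := \{s,t\}$ and $B := \{x,y\}$ inside the $2$-connected graph $G_{i-1}$: since deleting any single vertex leaves $G_{i-1}$ connected, no single vertex separates $A$ from $B$, and therefore there exist two vertex-disjoint $A$-$B$ paths. Being vertex-disjoint, these paths begin at the two distinct vertices $s$ and $t$ and end at the two distinct vertices $x$ and $y$, so they realize one of the two cross-matchings (say $s$ to $x$ and $t$ to $y$). Concatenating these two paths with the ear $P_i$, whose internal vertices lie outside $G_{i-1}$ and thus avoid both paths, yields a simple $s$-$t$-path that uses $P_i$. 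This is the $(i+1)$-st distinct path, and taking $i = l$ completes the induction.

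The main obstacle is the inductive step, and specifically justifying the cross-linkage cleanly. The point I would be most careful about is that Menger yields two paths with distinct endpoints on \emph{both} sides, so that we genuinely obtain an $s$-$t$-path rather than, for example, two paths both emanating from $s$. I would also need to treat separately the degenerate cases in which $\{s,t\}$ and $\{x,y\}$ overlap (for instance $s=x$), where the argument reduces to finding a single connecting path in $G_{i-1}$ after deleting one vertex and is straightforward. A secondary point worth stating explicitly is that an open ear decomposition may indeed be chosen to start from any prescribed cycle, which is exactly what lets us assume $s,t \in P_1$.
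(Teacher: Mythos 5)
Your proof is correct, and it reaches Lemma~\ref{pathlemma} by a route that differs from the paper's in two substantive ways. First, you normalize the open ear decomposition so that $s$ and $t$ both lie on the initial cycle $P_1$; the paper instead works with an arbitrary decomposition, and therefore has to carry the stronger induction hypothesis ``$l$ distinct paths between \emph{any} two vertices of $G'$'' and split the inductive step into three cases according to whether $s$ and $t$ lie in $G'=P_1\cup\cdots\cup P_{l-1}$ or in the interior of the last ear. Your normalization eliminates two of those cases and lets you induct on the fixed pair $s,t$, at the price of the auxiliary fact (standard, but worth stating explicitly, as you note) that an open ear decomposition can be started from any prescribed cycle. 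Second, in the surviving case the paper produces the new path through the ear by an auxiliary-graph trick: add a vertex $v$ adjacent to $s$ and $t$, subdivide an edge of $P_l$ by $w$, and take a cycle through $v$ and $w$ in the resulting $2$-connected graph. You instead apply the set version of Menger's theorem to $A=\{s,t\}$ and $B=\{x,y\}$ and splice the ear between the two disjoint $A$-$B$ paths; this is equally valid, and your cautionary remarks are the right ones --- disjointness is exactly what forces the two paths to realize a genuine $s$-to-$x$, $t$-to-$y$ linkage, and the degenerate cases where $\{s,t\}$ and $\{x,y\}$ intersect do need the separate (easy) argument via connectivity of $G_{i-1}$ minus one vertex. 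On balance the two proofs have comparable length: the paper's avoids the prescribed-cycle lemma, while yours avoids both the case analysis and the need to prove the statement for all vertex pairs simultaneously.
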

\begin{proof}
The proof proceeds by induction on the number of ears in an open ear decomposition of $G$. If $l=1$, $G$ is a cycle and the claim follows. If $l>1$, let $G'$ be the $2$-connected graph $P_1 \cup \cdots \cup P_{l-1}$. By induction hypothesis, $G'$ contains $l$ distinct $s$-$t$-paths for any two vertices $s$ and $t$. Let $a$ and $b$ be the two end vertices of $P_l$. We distinguish three cases (see Figure~\ref{fig:Paths}) and prove for each case that $G$ contains an additional $s$-$t$-path.

\begin{enumerate}
	\item $s \in V(G')$ and $t \in V(G')$:\\ It suffices to show that there is an $s$-$t$-path in $G$ that contains $P_l$; this path differs from the other $l$ paths. Consider the graph $H$ that is obtained from $G$ by adding a new vertex $v$ with neighbors $s$ and $t$ and by subdividing an edge of $P_l$ with the vertex $w$. As $H$ is 2-connected, there is a cycle in $H$ containing $v$ and $w$ by Menger's Theorem, which gives the desired $s$-$t$-path containing $P_l$ in $G$.
	\item $s \in V(G')$ and $t \notin V(G')$ (or, by symmetry, vice versa):\\Then $t$ is an inner vertex of $P_l$. By induction, we have $l$ distinct $s$-$a$-paths in $G'$. Extending each of these paths to $t$ along $P_l$ gives $l$ distinct $s$-$t$-paths in $G$. An additional $s$-$t$-path can be obtained by extending an $s$-$b$-path to $t$ along $P_l$.
	\item $s \notin V(G')$ and $t \notin V(G')$:\\ There are $l$ $a$-$b$-paths in $G'$, each of which can be extended to $s$-$t$-paths in $G$. An additional $s$-$t$-path in $G$ is the one in $P_l$.
\end{enumerate}
\vspace{-0.85cm}
\end{proof}

\begin{figure}[!ht]
	\centering
	\includegraphics[width = 1.0\textwidth]{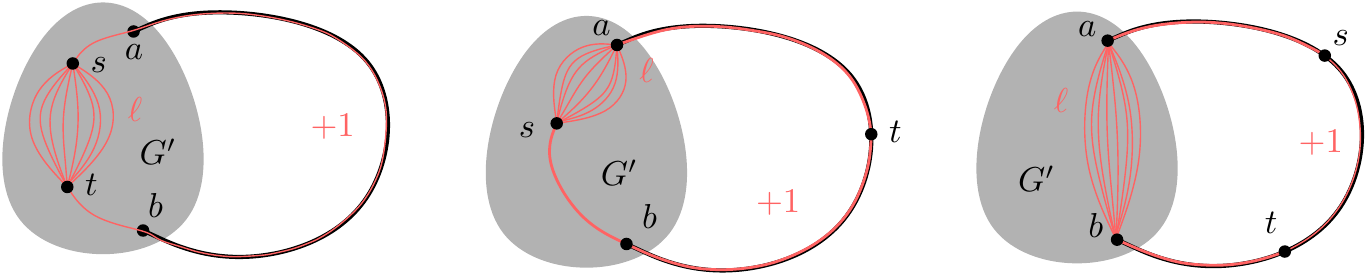}
	\caption{}
	\label{fig:Paths}
\end{figure}

Lemma~\ref{pathlemma} is used to prove the following lower bound on the number of cycles.

\begin{lemma}\label{numbercycles}
Every $2$-connected graph $G$ with $l$ ears contains $\binom{l+1}{2}$ distinct cycles.
\end{lemma}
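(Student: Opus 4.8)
The plan is to prove this by induction on the number of ears $l$, leveraging Lemma~\ref{pathlemma} to supply the new cycles created by the last ear. For the base case $l=1$, the graph $G$ is a single cycle $P_1$, which contains exactly $1 = \binom{2}{2}$ cycle, so the claim holds.

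For the inductive step, I would fix an open ear decomposition $(P_1,\ldots,P_l)$ and set $G' := P_1 \cup \cdots \cup P_{l-1}$, which is again $2$-connected and has $l-1$ ears. By the induction hypothesis, $G'$ contains $\binom{l}{2}$ distinct cycles. Since $\binom{l+1}{2} - \binom{l}{2} = l$, it suffices to exhibit $l$ additional cycles in $G$ that do not already occur in $G'$. Let $a$ and $b$ be the two endvertices of the last ear $P_l$. Applying Lemma~\ref{pathlemma} to $a$ and $t := b$ in the $(l-1)$-ear graph $G'$ yields $(l-1)+1 = l$ distinct $a$-$b$-paths in $G'$. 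By definition of an open ear decomposition, the internal vertices of $P_l$ lie outside $G'$, so each such $a$-$b$-path is internally disjoint from $P_l$; hence the union of $P_l$ with any one of these paths is a cycle in $G$.

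This produces $l$ cycles, and I would then argue they are pairwise distinct and distinct from the cycles of $G'$. They differ from every cycle in $G'$ because each one contains the whole ear $P_l$, whose internal vertices and edges are absent from $G'$. They differ from each other because the $a$-$b$-path components are pairwise distinct, so appending the common ear $P_l$ keeps them distinct as subgraphs. Combining the $\binom{l}{2}$ cycles from the induction hypothesis with these $l$ new cycles gives $\binom{l}{2} + l = \binom{l+1}{2}$ distinct cycles, completing the induction.

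The only subtlety worth checking carefully, rather than a genuine obstacle, is the disjointness condition guaranteeing that path-plus-ear is actually a cycle and that the resulting cycles are counted without duplication; both follow directly from the property that $P_l$ meets $P_1 \cup \cdots \cup P_{l-1}$ in exactly its endpoints, so I expect no real difficulty here.
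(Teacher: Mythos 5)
Your proof is correct and follows essentially the same route as the paper: induction on $l$, with the $l$ new cycles obtained by gluing the last ear $P_l$ onto the $l$ distinct $a$-$b$-paths in $P_1\cup\cdots\cup P_{l-1}$ supplied by Lemma~\ref{pathlemma}. Your extra remarks on distinctness (each new cycle contains the edges of $P_l$, which are absent from $G'$ since the ears partition $E$) only make explicit what the paper leaves implicit.
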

\begin{proof}
By induction on $l$. If $l=1$, $G$ is a cycle and the claim follows. If $l>1$, let $G'$ be the $2$-connected graph $P_1 \cup \cdots \cup P_{l-1}$. Then $G'$ has $l-1$ ears and contains $\binom{l}{2}$ distinct cycles by induction hypothesis. That are $l$ cycles less than we need to show for $G$. We prove that there are $l$ cycles in $G$, each of which contains $P_l$, which gives the claim. Let $a$ and $b$ be the end vertices of $P_l$. According to Lemma~\ref{pathlemma}, there are $l$ distinct $a$-$b$-paths in $G'$. Augmenting each of these paths with $P_l$ gives the desired $l$ additional cycles.
\end{proof}

Whitney proved that every open ear decomposition has exactly $m-n+1$ ears~\cite{Whitney1932a}. The number $m-n+1$ can be easily obtained by deleting one arbitrary edge from each ear, as the resulting graph will be a tree satisfying $m = n-1$. Applying the number of ears to Lemma~\ref{numbercycles} gives immediately the following corollary.

\begin{corollary}\label{cor}
Every $2$-connected graph contains $\binom{m-n+2}{2}$ distinct cycles.
\end{corollary}

The bound of Corollary~\ref{cor} is tight (for all $n$ and $m = 2n-4$), as the graphs $K_{2,n-2}$ show. If additionally the minimum degree in $G$ is $\delta$, we have $m \geq \delta n/2$ and get the following result.

\begin{corollary}
Every $2$-connected graph with minimum degree $\delta$ contains $\binom{n(\delta / 2 -1)+2}{2} = (\delta-2)^2n^2 /8 + 3(\delta-2)n /4 + 1$ distinct cycles.
\end{corollary}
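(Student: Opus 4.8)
The plan is to combine Corollary~\ref{cor} with the handshake lemma, so almost all of the work is already done. First I would invoke Corollary~\ref{cor} directly: every $2$-connected graph $G$ contains at least $\binom{m-n+2}{2}$ distinct cycles. It therefore suffices to bound $m$ from below in terms of $n$ and $\delta$. Summing degrees and using that every vertex has degree at least $\delta$ gives $2m = \sum_{v} \deg(v) \geq \delta n$, hence $m \geq \delta n/2$. Substituting this into the upper argument of the binomial yields $m-n+2 \geq n(\delta/2-1)+2$.

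Next I would argue that the binomial bound is monotone in its upper argument. Reading $\binom{x}{2} = x(x-1)/2$ as a function of a real variable $x$, it is increasing for $x \geq 1/2$. Since $G$ is $2$-connected we have $\delta \geq 2$ and $n \geq \delta+1$, so both $m-n+2$ and $n(\delta/2-1)+2$ are at least $2$. Therefore $\binom{m-n+2}{2} \geq \binom{n(\delta/2-1)+2}{2}$, and the number of cycles is bounded below by the latter quantity.

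Finally it remains to expand this binomial into the stated closed form. Writing $x := n(\delta-2)/2$ so that the upper argument equals $x+2$, a short computation gives $\binom{x+2}{2} = \frac{(x+2)(x+1)}{2} = \frac{x^2}{2} + \frac{3x}{2} + 1 = (\delta-2)^2 n^2/8 + 3(\delta-2)n/4 + 1$, matching the claim. I expect no real obstacle here: the argument is essentially a substitution into an already-proved bound. The only point requiring a little care is the monotonicity step, where the upper argument $n(\delta/2-1)+2$ need not be an integer when $\delta$ is odd, so the inequality must be read with the generalized (real-valued) binomial coefficient $\binom{x}{2}=x(x-1)/2$ rather than the integer one.
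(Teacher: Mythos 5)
Your proposal is correct and matches the paper's argument, which likewise just substitutes the handshake-lemma bound $m \geq \delta n/2$ into Corollary~\ref{cor}. The extra care you take with monotonicity and the real-valued binomial coefficient is a reasonable (and slightly more rigorous) elaboration of the same one-line substitution.
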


\section{Counting \texorpdfstring{$K_4$-}{}Subdivisions}
For a vertex $v$ in a $3$-connected graph $G$, let $G_v$ be the graph obtained from $G$ by deleting $v$. Let $d_1,\ldots,d_n$ be the vertex degrees of $G$ (by 3-connectivity, these are at least three) and, for a vertex $v$, let $d_v$ be the degree of $v$ in $G$. Instead of counting $K_4$-subdivisions directly in $G$, we will count cycles in the different graphs $G_v$ and augment these cycles to $K_4$-subdivisions using the following corollary of Menger's theorem.

\begin{lemma}[{Fan Lemma~\cite[Proposition~9.5]{Bondy2008}}]\label{fanlemma}
Let $v$ be a vertex in a $k$-connected graph $G$ and let $C$ be a set of at least $k$ vertices in $G$ with $v \notin C$. Then there are $k$ independent paths $P_1,\ldots,P_k$ from $v$ to distinct vertices $c_1,\ldots,c_k \in C$ such that $V(P_i) \cap C = c_i$ for each $1 \leq i \leq k$.
\end{lemma}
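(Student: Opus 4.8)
The plan is to derive the Fan Lemma from Menger's theorem, which the paper already invokes elsewhere. First I would augment $G$ to a graph $G'$ by adding a single new vertex $w \notin V(G)$ together with an edge from $w$ to every vertex of $C$. The point of this construction is that $k$ independent paths from $v$ to distinct vertices of $C$ in $G$ correspond exactly to $k$ internally disjoint $v$-$w$-paths in $G'$, so it suffices to produce the latter.

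By the vertex version of Menger's theorem, $G'$ contains $k$ internally disjoint $v$-$w$-paths provided that no set $S \subseteq V(G') \setminus \{v,w\}$ with $|S| < k$ separates $v$ from $w$. I would verify this as follows. Since $|C| \geq k > |S|$, there is a vertex $c \in C \setminus S$; as $c$ is adjacent to $w$, it lies in the component of $w$ in $G' - S$. The graph $G' - S$ agrees with $G - S$ on $V(G)$, because the only new edges are incident to $w$, and $G$ is $k$-connected with $|S| < k$, so $G - S$ is connected and in particular contains a $v$-$c$-path. This path avoids $w$ and therefore survives in $G' - S$, placing $v$ and $c$, and hence $v$ and $w$, in a common component of $G' - S$, which is a contradiction.

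Thus Menger's theorem yields internally disjoint $v$-$w$-paths $Q_1, \ldots, Q_k$. I would then delete $w$ from each path and, traversing the resulting $v$-walk starting from $v$, truncate it at the first vertex of $C$ it meets; call this vertex $c_i$ and the truncated path $P_i$. By construction $V(P_i) \cap C = \{c_i\}$, which is exactly the required ``first hit'' condition. The paths $P_1,\ldots,P_k$ remain pairwise independent, since they are subpaths of internally disjoint paths and therefore share only $v$. Finally, the $c_i$ are pairwise distinct: each $c_i$ is an internal vertex of $Q_i$ (it is neither $v$, as $v \notin C$, nor $w$, as $c_i \in V(G)$), and internal vertices cannot be shared among internally disjoint paths.

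The reduction itself is clean, and all remaining steps are routine once it is in place. The only points demanding care are the verification that $v$ and $w$ cannot be separated by fewer than $k$ vertices in $G'$, and the concluding truncation argument, which must simultaneously secure the intersection condition $V(P_i) \cap C = \{c_i\}$ and the distinctness of the endpoints $c_i$. These two bookkeeping steps are exactly where an otherwise straightforward Menger reduction could fail, so I would regard them as the main obstacle.
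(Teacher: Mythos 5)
Your proof is correct, and it is exactly the standard derivation the paper has in mind: the paper does not prove the Fan Lemma at all but cites it as \cite[Proposition~9.5]{Bondy2008} and explicitly calls it ``a corollary of Menger's theorem,'' which is precisely the reduction you carry out (attach a new vertex $w$ to all of $C$, verify no fewer than $k$ vertices separate $v$ from $w$, apply Menger, truncate at the first hit of $C$). Both delicate points you flag --- the separator argument via a surviving $v$-$c$-path in $G-S$, and the simultaneous distinctness/first-hit bookkeeping --- are handled correctly, so there is nothing to add.
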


Every cycle $C$ in $G_v$ gives a $K_4$-subdivision of $G$ by applying the Fan Lemma with $v$, $C$ and $k=3$. Every $K_4$-subdivision can occur from at most $4$ graphs $G_v$, as $v$ has to be a real vertex of that $K_4$-subdivision. Thus, each $K_4$-subdivision is counted at most 4 times. We will show that the numbers of cycles for every $G_v$, $v \in V(G)$, sum up to a large value, namely to the value $c \in \Omega(n^3)$. This implies the desired lower bound $\frac{c}{4} \in \Omega(n^3)$ for the number of $K_4$-subdivisions.

It remains to show that $c \in \Omega(n^3)$. Note that each $G_v$ is $2$-connected, as it only differs from $G$ by the deletion of one vertex. Moreover, each $G_v$ has exactly $m-d_v$ edges and $n-1$ vertices. According to Corollary~\ref{cor}, $G_v$ contains at least $\binom{m-d_v-n+3}{2} =: \binom{a-d_v+1}{2}$ cycles, where we define $a := m-n+2$ for brevity. Note that $a-d_v+1 \geq 2$ is positive, since $m-d_v \geq n-1$ because $G_v$ is 2-connected. We calculate the total number of cycles $c$ in all $G_v$ as follows.

\allowdisplaybreaks[1]
\begin{align*}
c &\geq \sum^n_{i=1}\binom{a-d_i+1}{2}\\
&= \sum^n_{i=1} \frac{a^2-2ad_i+d^2_i+a-d_i}{2}\\
&\geq \frac{1}{2}na^2 + \frac{1}{2}na - m - a\sum^n_{i=1}d_i +\frac{1}{2}\sum^n_{i=1}d^2_i\tag{as $\sum^n_{i=1}{d_i}=2m$}\\
&\geq \frac{1}{2}na^2 + \frac{1}{2}na - m - 2ma + 2m^2/n\tag{$\star$, Cauchy-Schwarz}\\
&= \frac{1}{2n}(m(n-2) -n^2+2n)^2 + \frac{1}{2}na - m\\
&\geq \frac{1}{2n} \left( \frac{3}{2}n^2 -3n -n^2+2n \right)^2 + \frac{1}{2}na - m\tag{as $m \geq \frac{3}{2}n$}\\
&= \frac{1}{8}n^3 - \frac{1}{2}n^2 + \frac{3}{2}n + \frac{1}{2}m(n-2) - \frac{1}{2}n^2\\
&\geq \frac{1}{8}n^3 - \frac{1}{2}n^2 + \frac{3}{2}n + \frac{3}{4}n^2 - \frac{3}{2}n - \frac{1}{2}n^2 \tag{as $m \geq \frac{3}{2}n$}\\
&= \frac{1}{8}n^3 - \frac{1}{4}n^2\\
\end{align*}

For $\star$, we used that $\sum^n_{i=1} d^2_i \geq \frac{\left(\sum^n_{i=1} d_i\right)^2}{n} = \frac{4m^2}{n}$, as $\sum^n_{i=1} d_i \leq \sqrt{n}\sqrt{\sum^n_{i=1} d^2_i}$ follows directly from applying the Cauchy-Schwarz inequality to the all 1- and the degree-vector. Using the upper bound of Section~\ref{sec:prel}, we obtain the following theorem.

\begin{theorem}\label{thm:boundN}
For every $n$, $\frac{1}{32}n^3 - \frac{1}{16}n^2 \leq \phi(n) \leq \frac{1}{6}n^3-n^2+\frac{11}{6}n+1 = \binom{n-1}{3}$. Thus, $\phi(n) \in \Theta(n^3)$.
\end{theorem}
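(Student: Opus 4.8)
The plan is to assemble the theorem directly from the two ingredients already developed in the excerpt: the upper bound from the wheel-graph discussion in Section~\ref{sec:prel}, and the lower bound from the cycle-counting argument via the Fan Lemma. Almost all the analytic work is done; the theorem statement is really a packaging step, so I would keep the proof short and cite the pieces rather than reprove them.

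\medskip\noindent\textbf{Upper bound.}
First I would recall the wheel-graph computation from Section~\ref{sec:prel}. A wheel on $n$ vertices (a hub joined to an $(n-1)$-cycle) is $3$-connected, and every $K_4$-subdivision it contains must use the hub as a real vertex, the remaining three real vertices being an arbitrary choice of three rim vertices. Hence such a wheel has exactly $\binom{n-1}{3}$ distinct $K_4$-subdivisions, which gives $\phi(n)\le\binom{n-1}{3}$. Expanding $\binom{n-1}{3}=\frac{(n-1)(n-2)(n-3)}{6}=\frac{1}{6}n^3-n^2+\frac{11}{6}n-1$, I would just verify the closed form stated in the theorem; this is a one-line polynomial identity. (I note the constant term in the theorem reads $+1$ where the expansion gives $-1$, so a point to double-check is whether the intended bound is $\binom{n-1}{3}$ or $\binom{n-1}{3}+1$; I would reconcile these before finalizing.)

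\medskip\noindent\textbf{Lower bound.}
For the lower bound I would invoke the displayed chain of inequalities culminating in $c\ge\frac{1}{8}n^3-\frac{1}{4}n^2$, where $c=\sum_{v\in V}(\text{number of cycles in }G_v)$. The argument is: each $G_v$ is $2$-connected with $n-1$ vertices and $m-d_v$ edges, so by Corollary~\ref{cor} it contains at least $\binom{m-d_v-n+3}{2}$ cycles; every cycle $C$ in $G_v$ yields a $K_4$-subdivision of $G$ by attaching three independent paths from $v$ to $C$ via the Fan Lemma (Lemma~\ref{fanlemma}) with $k=3$; and each $K_4$-subdivision arises this way from at most four vertices $v$, namely its four real vertices. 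Therefore the number of $K_4$-subdivisions is at least $c/4\ge\frac{1}{32}n^3-\frac{1}{16}n^2$. I would present this as a direct consequence of the preceding computation, since the summation, the use of $\sum d_i=2m$, and the Cauchy--Schwarz step ($\star$) are all carried out in the display; all that remains is to note the division by $4$ and to use $m\ge\frac{3}{2}n$ (which holds because $G$ is $3$-connected, so $\delta\ge3$ and $2m=\sum d_i\ge 3n$).

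\medskip\noindent\textbf{Main obstacle and caveats.}
There is no genuine mathematical obstacle left, as both bounds are established earlier; the statement is a corollary of the two. The one thing I would be careful about is the overcounting factor: the claim that every $K_4$-subdivision is counted at most four times relies on the fact that the central vertex $v$ of the Fan-Lemma construction must be a real vertex of the resulting subdivision, and a $K_4$-subdivision has exactly four real vertices. I would make sure the map from $(v,C)$-pairs to subdivisions is well enough understood that ``at most $4$'' is justified rather than merely plausible—in particular that distinct cycles $C$ in the same $G_v$ give distinct subdivisions, so that $c$ really is a lower bound on $4$ times the number of subdivisions. Granting that (established in the text preceding the display), the theorem follows, and the final line ``$\phi(n)\in\Theta(n^3)$'' is immediate from matching the $\Omega(n^3)$ lower bound with the $O(n^3)$ upper bound.
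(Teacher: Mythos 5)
Your proof follows the paper's argument exactly: the upper bound is the wheel-graph count $\binom{n-1}{3}$ from Section~\ref{sec:prel}, and the lower bound is $c/4$ where $c \geq \frac{1}{8}n^3-\frac{1}{4}n^2$ comes from summing the Corollary~\ref{cor} cycle counts over all $G_v$ and augmenting via the Fan Lemma with the factor-$4$ overcount. Your observation about the constant term is correct --- $\binom{n-1}{3}$ expands to $\frac{1}{6}n^3-n^2+\frac{11}{6}n-1$, so the $+1$ in the theorem statement is a typo in the paper, not an error in your derivation.
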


We conjecture that the upper bound coming from the wheel graphs is actually the right bound.

\begin{conjecture}
For every $n$, $\phi(n) = \binom{n-1}{3}$.
\end{conjecture}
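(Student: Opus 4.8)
The upper bound $\phi(n) \le \binom{n-1}{3}$ is already supplied by the wheels in Section~\ref{sec:prel}, so the entire content of the conjecture is the matching lower bound: every $3$-connected graph $G$ on $n$ vertices must contain at least $\binom{n-1}{3}$ distinct $K_4$-subdivisions. Theorem~\ref{thm:boundN} reaches only $\tfrac{1}{32}n^3$, so the plan is to locate and remove the slack in its proof. Two sources are visible there: first, each cycle of $G_v$ is charged with only a single $K_4$-subdivision, even though a long cycle can be completed to many; second, the uniform division by $4$ is wasteful, since Corollary~\ref{cor} is tight only for $K_{2,n-2}$, which is not even $3$-connected.

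My first move would be to replace the lossy estimate by an exact identity (the \emph{counting route}). For a vertex $v$ let $N_v$ denote the number of $K_4$-subdivisions of $G$ in which $v$ is a real vertex. Since every $K_4$-subdivision has exactly four real vertices, double counting yields $\sum_{v \in V} N_v = 4\,\mathrm{sub}(G)$, where $\mathrm{sub}(G)$ is the total number of $K_4$-subdivisions. The conjecture then becomes equivalent to $\sum_{v} N_v \ge 4\binom{n-1}{3}$, and this target is met with equality on the wheel: there $N_h = \binom{n-1}{3}$ for the hub $h$ and $N_r = \binom{n-2}{2}$ for each of the $n-1$ rim vertices, summing to exactly $4\binom{n-1}{3}$. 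The task thus reduces to a lower bound on each $N_v$ that is sensitive to the local connectivity around $v$; a uniform bound $N_v \ge \binom{n-2}{2}$ is demonstrably too weak, as summing it recovers only about three quarters of the target, mirroring the fact that in the extremal wheel the hub alone carries a full quarter of the weight. Bounding $N_v$ from below amounts to counting fans (Lemma~\ref{fanlemma}) from $v$ onto the many cycles that Corollary~\ref{cor} guarantees inside $G_v$, while carefully tracking how many distinct triples of attachment points each such cycle admits.

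A second and perhaps more tractable option is an induction along Tutte's Wheel Theorem (the \emph{inductive route}). Since adding an edge to a $3$-connected graph destroys no existing $K_4$-subdivision, $\mathrm{sub}$ is monotone under edge addition, so it would suffice to prove the bound for minimally $3$-connected graphs. If such a $G$ is not a wheel, Tutte's Wheel Theorem provides an edge $e$ whose contraction $G/e$ is again $3$-connected and has $n-1$ vertices, whence $\mathrm{sub}(G/e) \ge \binom{n-2}{3}$ by the inductive hypothesis. Because $\binom{n-1}{3} - \binom{n-2}{3} = \binom{n-2}{2}$, the induction would close as soon as one shows that uncontracting $e$, i.e.\ splitting the corresponding vertex, creates at least $\binom{n-2}{2}$ genuinely new $K_4$-subdivisions, for instance those in which the two endpoints $x,y$ of $e$ both serve as real vertices and $e$ itself realizes the $K_4$-edge between them. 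This increment $\binom{n-2}{2}$ is precisely what a rim-edge contraction $W_n \to W_{n-1}$ produces, so the inductive target is tight.

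In either route the main obstacle will be the same: one must pin down the exact constant rather than the mere order of growth, which forces the argument to identify the extremal configuration instead of merely bounding it. Concretely, in the counting route the hard part will be proving a per-vertex bound for $N_v$ that redistributes enough weight onto high-connectivity vertices to beat the uniform $\binom{n-2}{2}$; in the inductive route it will be establishing the increment inequality $\mathrm{sub}(G) \ge \mathrm{sub}(G/e) + \binom{n-2}{2}$ uniformly over a well-chosen contractible edge, including a proof that the newly created subdivisions are pairwise distinct and genuinely absent from $G/e$. Since I expect the wheel to be the unique extremal graph, I anticipate that closing the argument will ultimately require a stability component, showing that any graph approaching the bound must locally resemble a wheel, in order to force the equality rather than just an asymptotically tight estimate.
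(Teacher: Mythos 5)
The statement you are trying to prove is left \emph{open} in the paper: it appears there as a conjecture, and the authors establish only the order of growth $\phi(n) \in \Theta(n^3)$, with constants $\tfrac{1}{32}$ and $\tfrac{1}{6}$ far apart (Theorem~\ref{thm:boundN}). So there is no proof in the paper to compare against, and your text is, by its own admission, a research programme rather than a proof: in both of your routes the decisive inequality is named as the ``main obstacle'' and never established. What you do verify is correct but content-free relative to the goal --- the double count $\sum_v N_v = 4\,\mathrm{sub}(G)$ and the wheel computation $N_h = \binom{n-1}{3}$, $N_r = \binom{n-2}{2}$ with total $4\binom{n-1}{3}$ merely recast the conjecture; and in the counting route you give no mechanism beyond Corollary~\ref{cor} plus the Fan Lemma, which is exactly the paper's machinery and provably cannot reach the constant $\tfrac{1}{6}$: for a sparse $3$-connected graph with $m$ close to $\tfrac{3}{2}n$ it yields only about $\binom{n/2}{2} \approx n^2/8$ cycles per vertex, whereas the target average is $\binom{n-2}{2} \approx n^2/2$, so a factor of roughly four is missing with no proposed way to close it.

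There is moreover a concrete error in the inductive route. You propose to witness the increment $\mathrm{sub}(G) \ge \mathrm{sub}(G/e) + \binom{n-2}{2}$ by the subdivisions in which both endpoints $x,y$ of the contracted edge are real vertices and $e$ realizes the branch edge between them, asserting that ``this increment $\binom{n-2}{2}$ is precisely what a rim-edge contraction $W_n \to W_{n-1}$ produces.'' It is not. In $W_n$ every $K_4$-subdivision consists of the hub plus three rim vertices, so the subdivisions with both adjacent rim vertices $x$ and $y$ real number only $n-3$ (hub, $x$, $y$, and one further rim vertex), an order of magnitude short of $\binom{n-2}{2}$. The actual increment on the wheel decomposes as $(n-3) + \binom{n-3}{2} = \binom{n-2}{2}$, where the second and dominant term counts \emph{collisions}: pairs of subdivisions using exactly one of $x,y$ as a real vertex that become identified after contraction. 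Hence any correct increment argument must control the failure of injectivity of the contraction map on subdivisions --- a global statement about $G$, not a count of locally ``new'' configurations --- and you would additionally owe a proof that subdivisions of $G/e$ lift injectively to $G$, which requires care when the contracted vertex is real versus unreal. As it stands, the proposal identifies sensible directions but proves nothing beyond what the paper already contains, and its one quantitative claim about the inductive step fails already on the conjectured extremal graph.
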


\section{A Better Bound for Large m}
When $m$ is large, we can obtain better lower bounds in dependence on $m$. Let $\phi(n,m)$ be the minimum number of pairwise different $K_4$-subdivisions that are contained in every 3-connected graph having $n$ vertices and $m$ edges. We use the same idea as above, but additionally construct many $K_4$-subdivisions from one cycle $C$ in $G_v$ whenever $d_v$ is large.

\begin{figure}[!ht]
	\centering
	\includegraphics[width = 0.7\textwidth]{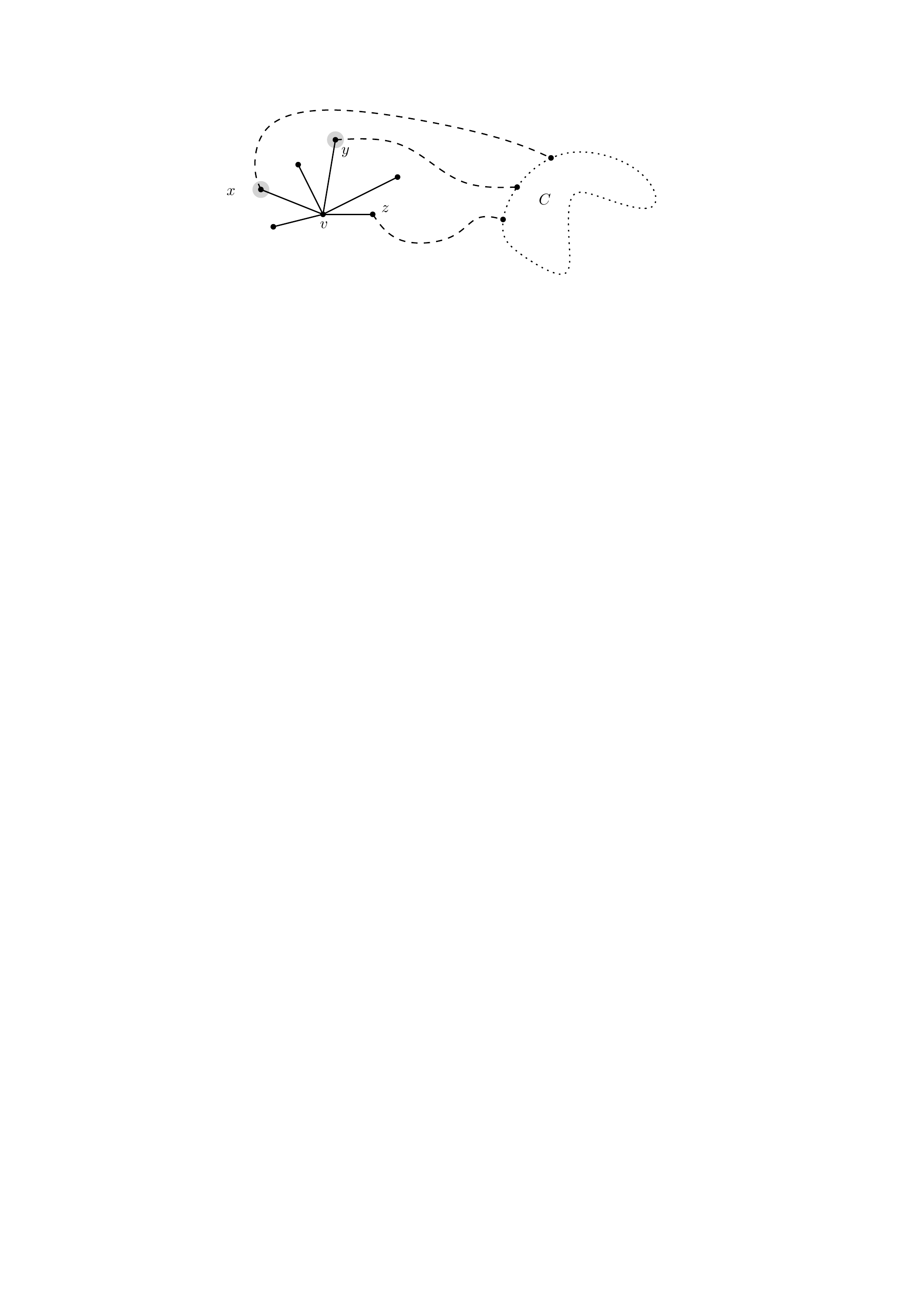}
	\caption{}
	\label{fig:quadraticCycles}
\end{figure}

Let $D := \{v,x,y\}$, where $x$ and $y$ are arbitrary distinct neighbors of $v$ in $G$ ($D$ may intersect $C$). By Menger's theorem, there are three independent $C$-$D$-paths in $G$, possibly of length 0, which we extend to three independent $C$-$v$-paths by adding the edges $xv$ and $yv$. This forces two of the three independent paths to go through $x$ and $y$ (see Figure~\ref{fig:quadraticCycles}). The third neighbor $z$ of $v$ in the independent paths cannot be forced this way. However, we may obtain the same $K_4$-subdivision containing $v,x,y,z$ and $C$ for $G_v$ only when $D$ is either $\{v,x,y\}$, $\{v,x,z\}$ or $\{v,y,z\}$. Thus, such a $K_4$-subdivision is counted at most three times. Since we can choose any two neighbors $x$ and $y$ of $v$, this gives at least $\lceil \frac{1}{3}\binom{d_v}{2} \rceil$ $K_4$-subdivisions for every cycle $C$ in $G_v$. Thus, we obtain

\allowdisplaybreaks[1]
\begin{align*}
\phi(n,m) &\geq \frac{1}{4} \sum^n_{i=1} \left\lceil \frac{1}{3}\binom{d_i}{2} \right\rceil \binom{a-d_i+1}{2} \tag{$\star$}\\
\end{align*}

We will use this inequality for the following theorem.

\begin{theorem}
$\phi(n,m) \in \Omega(m^4/n)$.
\end{theorem}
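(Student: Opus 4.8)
The plan is to start from the inequality $(\star)$ just derived, namely
$$
\phi(n,m) \geq \frac{1}{4} \sum_{i=1}^{n} \left\lceil \frac{1}{3}\binom{d_i}{2} \right\rceil \binom{a-d_i+1}{2},
$$
and to lower-bound this sum purely in terms of $n$ and $m$. First I would drop the ceiling and absorb all the small constants ($\tfrac14$, $\tfrac13$, the $\tfrac12$'s inside the binomials) into a single positive constant, so that up to constants the summand is $d_i^2 (a-d_i)^2$, where $a = m-n+2 = \Theta(m)$ since $m \geq \tfrac32 n$. Thus the task reduces to showing $\sum_{i=1}^{n} d_i^2 (a-d_i)^2 \in \Omega(m^4/n)$, subject to the only global constraint we may use, namely $\sum_{i=1}^n d_i = 2m$.

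The cleanest route is a single application of the Cauchy--Schwarz (or power-mean) inequality to the vector $\bigl(d_i(a-d_i)\bigr)_{i=1}^n$ against the all-ones vector, exactly as was done for the $\Omega(n^3)$ bound:
$$
\sum_{i=1}^n d_i^2(a-d_i)^2 \;\geq\; \frac{1}{n}\left( \sum_{i=1}^n d_i(a-d_i) \right)^2 \;=\; \frac{1}{n}\bigl( a\cdot 2m - \textstyle\sum_i d_i^2 \bigr)^2 .
$$
Now $\sum_i d_i^2$ can be bounded above using $d_i \le n-1$, giving $\sum_i d_i^2 \le (n-1)\sum_i d_i = 2m(n-1)$, so the inner quantity is $2m\bigl(a-(n-1)\bigr) = 2m(m-2n+3) = \Theta(m^2)$ because $a - (n-1) = m - 2n + 3$ and $m \ge \tfrac32 n$ forces this to be a positive $\Theta(m)$ term only when $m$ is sufficiently larger than $2n$. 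This last point is the subtlety: for the $\Omega(m^4/n)$ conclusion to be meaningful and the estimate $m - 2n + 3 = \Theta(m)$ to hold, one needs $m$ bounded away from the trivial regime, e.g.\ $m \geq 3n$, which is harmless since $\Omega(m^4/n)$ is only an interesting improvement when $m$ is large.

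The main obstacle I anticipate is the second step rather than the Cauchy--Schwarz application: controlling $\sum_i d_i^2$ from \emph{above} so that $a\cdot 2m - \sum_i d_i^2$ stays a positive $\Theta(m^2)$ quantity. The naive bound $\sum_i d_i^2 \le (n-1)\cdot 2m$ suffices, but one must verify it does not swamp the $2am$ term; this is exactly where the hypothesis that $m$ is large (so that $a \approx m \gg n$) is essential, and it explains why the theorem is phrased as a bound for large $m$. Once the inner expression is pinned at $\Theta(m^2)$, squaring and dividing by $n$ yields $\Theta(m^4/n)$, and restoring the constants absorbed at the start completes the proof. I would finish by remarking that substituting $m = \Theta(n)$ recovers an $\Omega(n^3)$ bound consistent with Theorem~\ref{thm:boundN}, and that $m = \Theta(n^2)$ gives $\Omega(n^3)$ as well, so the new bound genuinely dominates only in the intermediate-to-dense regime.
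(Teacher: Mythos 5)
Your proposal is correct, and it follows the paper's overall strategy: start from $(\star)$, split into the sparse case $m<3n$ (where $m^4/n=O(n^3)$ and Theorem~\ref{thm:boundN} already gives $\Omega(n^3)$) and the dense case $m\geq 3n$, and finish with one application of Cauchy--Schwarz. The mechanics of the dense case differ in an interesting way, though. The paper decouples the two binomial factors: it bounds $\binom{a-d_i+1}{2}\geq\binom{m/3}{2}$ uniformly (using $d_i\leq n-1$ and $m>3n$ pointwise), pulls that constant out, and then applies Cauchy--Schwarz to get the \emph{lower} bound $\sum_i d_i^2\geq 4m^2/n$. You instead keep the product coupled, apply Cauchy--Schwarz to the vector $\bigl(d_i(a-d_i)\bigr)_i$ to get $\sum_i d_i^2(a-d_i)^2\geq\frac{1}{n}\bigl(2am-\sum_i d_i^2\bigr)^2$, and then need the \emph{upper} bound $\sum_i d_i^2\leq 2m(n-1)$ to certify that the inner sum is a positive $\Theta(m^2)$ quantity. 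So the two proofs use $\sum_i d_i^2$ in opposite directions; both succeed because in each case the surviving term is $\Theta(m^2)$ once $m\geq 3n$, and both invoke $d_i\leq n-1$ at the critical moment (yours in aggregate, the paper's per summand). Your version is arguably slightly slicker in that it avoids discarding the variation of the cycle-count factor, but it yields the same order of constant and no asymptotic gain. Two small things to make explicit in a final write-up: the sparse case $m<3n$ must actually be closed by citing $\phi(n,m)\geq\phi(n)\in\Omega(n^3)\supseteq\Omega(m^4/n)$ rather than dismissed as uninteresting (the theorem claims the bound for all $m$), and you should record that $a-d_i\geq 1$ (which the paper notes follows from $2$-connectivity of $G_v$) so that the reduction of $\binom{a-d_i+1}{2}$ to a constant times $(a-d_i)^2$ and the nonnegativity of the $x_i$ in Cauchy--Schwarz are legitimate.
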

\begin{proof}
If $m < 3n$, $m = \Theta(n)$ and the bound follows directly from $\phi(n) \in \Theta(n^3)$ of Theorem~\ref{thm:boundN}. Thus, let $m > 3n$; if we can prove the same asymptotic bound for this case, taking the lower constant factor of the two cases yields the claim. We have $a-d_i+1 \geq m-2n+3 > \frac{m}{3}$ in inequality $(\star)$. Thus,

\allowdisplaybreaks[1]
\begin{align*}
\phi(n,m) &\geq \frac{1}{12} \sum_{i=1}^{n} \binom{d_i}{2} \binom{\frac{m}{3}}{2} = \frac{1}{24} \binom{\frac{m}{3}}{2} \left(\sum^n_{i=1}d^2_i - \sum^n_{i=1}d_i \right)\\
&\geq \frac{1}{24} \binom{\frac{m}{3}}{2} \left(4\frac{m^2}{n} - 2m \right) \tag{Cauchy-Schwarz, $\sum^n_{i=1}{d_i}=2m$}\\
&= \frac{m}{6} \binom{\frac{m}{3}}{2} \left(\frac{m}{n} - \frac{1}{2} \right)\\
&\in \Omega(m^4/n) 
\end{align*}
\end{proof}

\section{\#P-Hardness}
Instead of only giving a lower bound for the number of $K_4$-subdivisions, one may try to compute their exact number. Let \#\texttt{SUBDIVISIONS} be the problem of counting the exact number $\#K_4(G)$ of $K_4$-subdivisions in general graphs. We reduce the following \#P-hard problem \#\texttt{S-T-PATHS}~\cite{Valiant1979} to the problem \#\texttt{FIXED-SUBDIVISIONS} and then to \#\texttt{SUBDIVISIONS}, which proves that it is \#P-hard.

\begin{enumerate}[itemsep=0pt,leftmargin=*,widest=Problem:]
	\item[Problem:] \#\texttt{S-T-PATHS} (this problem is \#P-hard~\cite{Valiant1979}) 
	\item[Input:] $G$; $s,t \in V$
	\item[Output:] Number of different $s$-$t$-paths in $G$.
\end{enumerate}

\begin{enumerate}[itemsep=0pt,leftmargin=*,widest=Problem:]
	\item[Problem:] \#\texttt{FIXED-SUBDIVISIONS} 
	\item[Input:] $G$; $a,b,c,d \in V$
	\item[Output:] Number of different $K_4$-subdivisions in $G$ having $a,b,c,d$ as real vertices.
\end{enumerate}

\begin{theorem}\label{thm:hard}
Counting $K_4$-subdivisions in general graphs is \#P-hard.
\end{theorem}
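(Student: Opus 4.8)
The plan is to establish the two reductions announced just before the statement, namely \#\texttt{S-T-PATHS} $\le$ \#\texttt{FIXED-SUBDIVISIONS} $\le$ \#\texttt{SUBDIVISIONS}. Since \#\texttt{S-T-PATHS} is $\#P$-hard by Valiant, composing the two reductions yields a polynomial-time Turing reduction to \#\texttt{SUBDIVISIONS} and hence proves Theorem~\ref{thm:hard}.

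For the first reduction, given an instance $(G,s,t)$ of \#\texttt{S-T-PATHS} I would build $G'$ by adding two new vertices $c,d$ together with the five edges $cs$, $ct$, $cd$, $ds$, $dt$, and then call \#\texttt{FIXED-SUBDIVISIONS} with real vertices $a:=s$, $b:=t$, $c$, $d$. The key observation is that in any $K_4$-subdivision with this real set the five new edges are \emph{forced}: the vertices $c$ and $d$ have degree exactly three in $G'$ (with neighbours $s,t,d$ and $s,t,c$ respectively), so as branch vertices all three of their incident edges must be used, which pins the five $K_4$-edges $cs$, $ct$, $cd$, $ds$, $dt$ to be precisely these single edges. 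The only remaining $K_4$-edge is the $s$-$t$-path, and by internal disjointness it must avoid the branch vertices $c,d$; since moreover $c,d\notin V(G)$, it lives entirely inside $G$. Conversely, every $s$-$t$-path in $G$ completes the five forced edges to a valid $K_4$-subdivision with real set $\{s,t,c,d\}$, and internal disjointness holds trivially because the five edges have no interior vertices. This is a bijection, so the answer returned equals the number of $s$-$t$-paths in $G$.

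The second reduction is the crux. Given a \#\texttt{FIXED-SUBDIVISIONS} instance $(H,a,b,c,d)$, I must recover the number of $K_4$-subdivisions whose real set is \emph{exactly} $\{a,b,c,d\}$ from the unrestricted count $\#K_4(\cdot)$ supplied by the oracle. The approach I would take is to attach to each of $a,b,c,d$ a rigid gadget that forces that vertex to be a branch vertex of every $K_4$-subdivision of the modified graph $H'$ while itself contributing no branch vertex; because $K_4$ has only four branch vertices, forcing all of $a,b,c,d$ to be real leaves no room for a fifth branch vertex, so $\#K_4(H')$ counts precisely the desired subdivisions, up to a multiplicity introduced by the gadgets that is either one (if the gadget admits a unique completion) or recoverable by interpolating in a gadget size parameter. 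The hard part will be exactly this forcing step: in a general $H$ other vertices may have large degree and could a priori serve as branch vertices, and spurious $K_4$-subdivisions could sit inside $H$ far away from $a,b,c,d$; one must therefore design the gadgets and argue, via a connectivity and cut analysis of $H'$, both that no $K_4$-subdivision can avoid any of $a,b,c,d$ as a branch vertex and that no additional branch vertex can arise. Should a direct forcing gadget prove too delicate to analyse, an alternative is to express $\#K_4(H')$ as a sum over admissible real-vertex sets and isolate the $\{a,b,c,d\}$ term by inclusion–exclusion over the few offending vertices, using auxiliary oracle calls on graphs in which a chosen vertex is locally replaced by a gadget that admits only pass-through routings and hence cannot branch; the subtle point there is carrying out such local surgery without altering the set of admissible subdivisions elsewhere. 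In either case the composition of the two reductions gives the claimed $\#P$-hardness.
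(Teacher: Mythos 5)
Your first reduction is correct and essentially the paper's: making $c$ and $d$ degree-$3$ vertices forces five of the six branches to be single edges, so the sixth branch is exactly an $s$-$t$-path in $G$; the paper realizes the same idea by replacing the edge $ab$ of a $K_4$ on $\{a,b,c,d\}$ with $as\cup G\cup tb$. The genuine gap is the second reduction, which you describe only as a plan with two unrealized alternatives rather than as a construction. The ``forcing gadget'' idea cannot work in the form you state it: a local gadget attached at $a,b,c,d$ may influence how branches pass through these four vertices, but it cannot destroy $K_4$-subdivisions of $H$ that avoid $a,b,c,d$ altogether, nor prevent an arbitrary high-degree vertex of $H$ from serving as a branch vertex --- you acknowledge both problems yourself and leave them unsolved. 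The inclusion--exclusion fallback is likewise unspecified (which ``few offending vertices''? there can be $\Theta(n)$ candidate branch vertices and exponentially many candidate real sets), so neither route yields a reduction as written.

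The paper avoids forcing entirely and uses amplification instead. Each edge of $G'$ with exactly one (resp.\ both) endvertices in $\{a,b,c,d\}$ is replaced by a gadget admitting $2^{s}$ (resp.\ $2^{2s}$) distinct paths between its endpoints, with $s:=n^{2}$. Inner gadget vertices can never be real, so a $K_4$-subdivision of $G'$ with $x$ real and $y$ unreal vertices in $\{a,b,c,d\}$ lifts to exactly $2^{s(3x+2y)}$ subdivisions of $G''$ (three incident branches per real vertex, two per unreal one). The exponent $3x+2y\le 12$ is maximized only at $(x,y)=(4,0)$, and since $2^{s}$ exceeds every count $N_{x,y}\le 2^{\binom{n}{2}}$, one recovers $N_{4,0}=\bigl\lfloor \#K_4(G'')/2^{12s}\bigr\rfloor$ from a single oracle call. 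This ``read off the leading digit in base $2^{s}$'' trick is exactly what replaces the structural forcing you were hoping for, and it is the missing ingredient in your proposal.
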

\begin{proof}
We first reduce \#\texttt{S-T-PATHS} to \#\texttt{FIXED-SUBDIVISIONS}. Given an input $(G,s,t)$ of the first problem, construct the input $(G',a,b,c,d)$ for the second problem such that $G'$ is obtained from a $K_4$ with vertices $\{a,b,c,d\}$ by replacing the edge $ab$ with the graph $as \cup G \cup tb$ (see Figure~\ref{fig:reduction}). Thus, $G$ contains an $s$-$t$-path if and only if $G'$ contains an $a$-$b$-path not intersecting $\{c,d\}$. It follows that the number of $s$-$t$-paths in $G$ is exactly the number of $K_4$-subdivisions having real vertices $\{a,b,c,d\}$ in $G'$.

\begin{figure}[!ht]
	\centering
	\includegraphics[width = 0.9\textwidth]{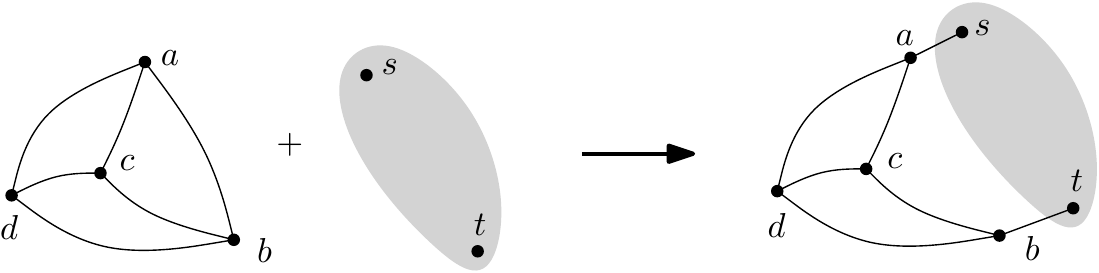}
     \caption{}
	\label{fig:reduction}
\end{figure}

We now reduce \#\texttt{FIXED-SUBDIVISIONS} to \#\texttt{SUBDIVISIONS}. Suppose $(G',a,b,c,d)$ is an instance of the first problem; we construct the instance $G''$ of the second problem by replacing certain edges of $G'$ with the gadget shown in Figure~\ref{fig:weightedge}. The number of cycles in this gadget is fixed to $s := n^2$, so that $2^s$ exceeds the maximal number $\#K_4(G)$ of $K_4$-subdivisions in $G'$ (which is at most $2^{\binom{n}{2}}$). Clearly, the sizes of $G'$ and $G''$ are polynomial in the size of $G$.

\begin{figure}[!ht]
	\centering
	\includegraphics[width = 0.6\textwidth]{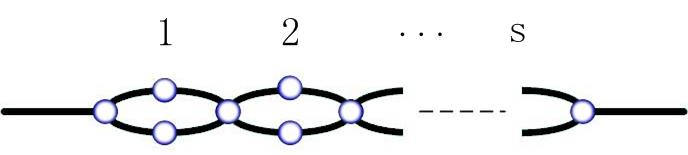}
     \caption{}
	\label{fig:weightedge}
\end{figure}

To construct $G''$ from $G'$, we replace every edge having exactly one endvertex in $\{a,b,c,d\}$ by one gadget and every edge having both endvertices in $\{a,b,c,d\}$ by two gadgets joined in series. These gadgets allow thus $2^s$ and $2^{2s}$ different paths between their endvertices, respectively. For convenience, we may see $G''$ as weighted graph $G'$ for which each edge is weighted with either $1$, $2^s$ or $2^{2s}$.

Clearly, no inner vertex of a gadget may be a real vertex of a $K_4$-subdivision. Thus, there is the identity mapping between the $K_4$-subdivisions in $G'$ and the (weighted) ones in $G''$.
Suppose there are $N_{x,y}$ $K_4$-subdivisions in $G'$ with exactly $x$ real vertices in $\{a,b,c,d\}$ and exactly $y$ unreal vertices in $\{a,b,c,d\}$. Each of them corresponds to a weighted $K_4$-subdivision of $G''$, for which the product of edge weights is exactly $2^{s(3x+2y)}$.
Hence, it corresponds to exactly $2^{s(3x+2y)}$ different $K_4$-subdivisions in $G''$.
In fact, $\#K_4(G'')=\sum_{x+y\leq 4}2^{s(3x+2y)}N_{x,y}$.
As $2^s$ exceeds $N_{x,y}$, $\left\lfloor \frac{\#K_4(G'')}{2^{12s}} \right\rfloor=N_{4,0}$ is the answer for instance $(G',a,b,c,d)$.
\end{proof}

While giving \#P-hardness, the above reductions only argue about general graphs. Using the result above, we show the stronger statement that counting $K_4$-subdivisions in $k$-connected graphs is still \#P-hard for every fixed $k$.

\begin{theorem}
For any fixed $k$, counting $K_4$-subdivisions in $k$-connected graphs is \#P-hard.
\end{theorem}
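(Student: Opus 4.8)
The plan is to reduce the general problem \#\texttt{SUBDIVISIONS}, shown \#P-hard in Theorem~\ref{thm:hard}, to its restriction to $k$-connected graphs. Given a general graph $G$ on $n$ vertices, I would construct in polynomial time a $k$-connected graph $H$ together with a rule that recovers $\#K_4(G)$ from $\#K_4(H)$. The two competing demands are that $H$ be genuinely $k$-connected while its $K_4$-subdivision count still exposes $\#K_4(G)$, and essentially all of the difficulty lies in meeting both at once.

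First I would force connectivity by adjoining a set $A=\{u_1,\dots,u_k\}$ of $k$ new apex vertices, made pairwise adjacent and each joined to every vertex of $G$. Keeping the apex--apex edges as ordinary edges and the original edges of $G$ unchanged, the standard ``surviving apex'' argument gives $k$-connectivity: after deleting any $k-1$ vertices some $u_i$ remains and is adjacent to every other surviving vertex, so the rest stays connected. To separate the count, I would not join the apexes to $G$ directly but through identical copies of a \emph{weight gadget}: a two-terminal graph with a large, fixed, polynomially computable number $W$ of terminal-to-terminal paths, chosen so that $W$ exceeds the trivial bound $2^{\binom n2}$ on $\#K_4(G)$.

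The counting identity I would then establish is $\#K_4(H)=\#K_4(G)+c_k+W\cdot(\text{integer})$, where $c_k:=\#K_4(K_k)$ is the constant number of $K_4$-subdivisions living entirely inside the apex clique. The point is that every $K_4$-subdivision of $H$ either lies entirely in $G$ (there are $\#K_4(G)$ of these, each realized once), lies entirely in $A$ (there are $c_k$ of these, a constant since $k$ is fixed), or uses at least one edge of a weight gadget and is therefore realized a multiple of $W$ times. Choosing $W>\#K_4(G)+c_k$, the residue $\#K_4(H)\bmod W$ equals $\#K_4(G)+c_k$, and subtracting the precomputable constant $c_k$ yields $\#K_4(G)$; this is exactly the modular-extraction mechanism already used in the proof of Theorem~\ref{thm:hard}.

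The main obstacle is building the weight gadget so that it is simultaneously $k$-connected, of polynomial size, and of exactly-known path count $W\ge 2^{\mathrm{poly}(n)}$. The naive diamond-chain gadget has degree-two internal vertices, which create $2$-separators and destroy $k$-connectivity, while adding light edges from those internal vertices to the apexes would repair the degrees but reintroduce cheap subdivisions that pollute the residue. I would therefore thicken the gadget: replace each junction of the diamond chain by a $k$-clique interface and each diamond by a small $k$-connected block that still offers two macroscopic routing choices, so that the number of terminal paths stays at least $2^{\Theta(\text{length})}$ while every internal separator has size at least $k$. Verifying that this thickened gadget has no separator of size below $k$, that it contributes no light subdivisions outside $A$, and that its path count $W$ is both identical across copies and exactly computable is the delicate part; once it is in place, the reduction and the extraction step go through verbatim.
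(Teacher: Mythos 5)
Your reduction has a genuine gap at exactly the point you flag as ``the delicate part'', and the gap is not just delicate but essential to your approach. Your modular-extraction identity $\#K_4(H)=\#K_4(G)+c_k+W\cdot(\text{integer})$ rests on the trichotomy that every $K_4$-subdivision of $H$ lies in $G$, lies in $A$, or touches a gadget and is realized a multiple of $W$ times. The third clause only works if, as in the proof of Theorem~\ref{thm:hard}, no interior vertex of a gadget can be a real vertex and every use of a gadget is a terminal-to-terminal traversal contributing a clean factor of $W$. But for $k\geq 3$ every interior vertex of a $k$-connected gadget has degree at least $k\geq 3$ (a vertex of smaller degree would make its neighborhood a separator of size below $k$), so interior vertices \emph{can} be real vertices; worse, any $3$-connected gadget contains $K_4$-subdivisions entirely inside itself, and these are counted once per gadget copy, not a multiple of $W$. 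Beyond those, there are subdivisions with some real vertices at or inside gadget terminals whose multiplicities are not controlled by $W$ at all. So the residue $\#K_4(H)\bmod W$ is polluted by terms you have not accounted for, and the obstruction is structural: the ``degree-two interior'' property that makes the gadget of Theorem~\ref{thm:hard} work is incompatible with $k$-connectivity for $k\geq 3$. Thickening the gadget does not resolve this; it creates it.

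The paper avoids weight gadgets entirely and uses a different extraction mechanism: it forms $G_s$ by adding $s>n$ mutually nonadjacent universal vertices $x_1,\ldots,x_s$ (making $G_s$ $k$-connected), accepts that these vertices pollute the count, and exploits their interchangeability. Classifying $K_4$-subdivisions of $G_s$ by the number $t\leq 3n$ of $x_i$-vertices they use, one gets $\#K_4(G_s)=\sum_t s(s-1)\cdots(s-t+1)\,N_t$, a polynomial in $s$ of degree at most $3n$ whose constant term is $N_0=\#K_4(G)$. Querying the oracle on $G_s$ for $3n+1$ distinct values of $s$ and inverting the resulting Vandermonde system recovers $N_0$. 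If you want to rescue your single-instance modular approach you would need to fully classify and count, modulo $W$, every way a subdivision can meet a $k$-connected gadget; the interpolation route sidesteps that entirely and is the argument you are missing.
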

\begin{proof}
We can assume $k > 1$, as the arguments in the proof of Theorem~\ref{thm:hard} hold also for connected graphs. Let $G$ be an instance of \#\texttt{SUBDIVISIONS} and let $\{v_1,\ldots,v_n\}$ its vertex set. For a reduction to the problem in question, we construct instances $G_s$ from $G$ by adding $s > n$ new vertices $\{x_1,\ldots,x_s\}$ and all edges $x_iv_j$ for $1 \leq i \leq s$ and $1 \leq j \leq n$. Clearly, $G_s$ is $k$-connected and $n \geq 3$, as $n > k > 1$.

Consider a $K_4$-subdivision of $G_s$. It contains at most 4 real $x_i$-vertices and at most $n$ $v_j$-vertices. Thus, it contains at most $3n/2$ unreal $x_i$-vertices (this is not the best possible bound). In total, it contains at most $3n$ of the $x_i$-vertices, since $3n/2 \geq 4$.

In every $K_4$-subdivision of $G_s$, we delete all $x_i$-vertices and call the remaining graph a \emph{partial} $K_4$-subdivision. Let $N_t$, $0 \leq t \leq 3n$, be the number of different partial $K_4$-subdivisions of $G_s$ that were generated by deleting exactly $t$ vertices. For the desired $k$ and every integer $r \geq 0$, the number $N_t$ of $G_s$ with $s := (k+3n)+r$ is the same, by interchangeability of the $x_i$-vertices.

For the same reason, each partial $K_4$-subdivision that was counted for $N_t$ can be extended to a $K_4$-subdivision of $G_s$ in a number $P_{s,t}$ of different ways that is only dependent on $s$ and $t$: Namely, $P_{s,t} = s!/(s-t)! = s(s-1)\cdots(s-t+1)$, which is the number of ways we can choose $t$ ordered non-repetitive elements from $\{x_1,\ldots,x_s\}$. Hence,
$$\#K_4(G_s)= \sum_{t=0}^{3n} P_{s,t} N_t=N_0+\sum_{t=1}^{3n} s(s-1)\cdots(s-t+1) N_t\linebreak=N_0+\sum_{t=1}^{3n} s^t N'_t,$$
where $N'_t$ is some fixed linear combination of $N_t,N_{t+1},\ldots,N_{3n}$ that is not dependent on $s$.

We construct the graph $G_s$ for each $s \in \{k+3n,k+3n+1,...,k+6n+1\}$ and obtain $3n+1$ linear equations in $N_0,N'_1,\ldots,N'_{3n}$ (we set $N'_0 := N_0$), whose coefficient matrix $M$ is Vandermonde in $s$ and nonsingular, as all the values of $s$ are pairwise distinct. We thus have the equation $\overrightarrow{\#K_4(G_s)} = M \overrightarrow{N'_t}$ for the corresponding vectors. As $M$ is nonsingular, we can invert it and obtain $\overrightarrow{N'_t} = M^{-1}\overrightarrow{\#K_4(G_s)}$. As we know the elements of $\#K_4(G_s)$ for all $s$, we get $\overrightarrow{N'_t}$ and therefore in particular $N_0$, which is equal to $\#K_4(G)$.
\end{proof}

\bibliography{Jens}
\bibliographystyle{abbrv}
\end{document}